\newcommand{\A}{{\mathcal{A}}}
\newcommand{\bbP}{\mathbb{P}}
\newcommand{\R}{\mathbb{R}}
\newcommand{\vs}{\vspace{-1mm}}
\newcommand{\hs}[1][0.5]{\hspace{-#1mm}}
\newcommand{\online}[1]{This proof is omitted for space and appears online \cite{collins2021robust}.}
\renewcommand{\online}[1]{#1}
\newcommand{\nb}[1]{{\hl  {#1}}}
\renewcommand{\nb}[1]{}  
\DeclareMathOperator*{\argmax}{arg\,max}
\newtheorem{definition}{Definition}
\newtheorem{theorem}{Theorem}
\newtheorem{proposition}{Proposition}
\newtheorem{lemma}{Lemma}
\title{\Large \bf
Robust Stochastic Stability in Dynamic and Reactive Enviroments
}
\author{Brandon C. Collins, Lisa Hines, Gia Barboza, and Philip N. Brown
\thanks{This work was supported by the National Science Foundation under Grants \#DEB-2032465 and \#ECCS-2013779.}
\thanks{The authors are with the University of Colorado Colorado Springs,
        CO 80918, USA 
        {\tt\small \{bcollin3,lhines,gbarboza, philip.brown\}@uccs.edu}}
}
\begin{document}

\maketitle
\pagestyle{empty}

\begin{abstract}
The theory of learning in games has extensively studied situations where agents respond dynamically to each other by optimizing a fixed utility function.
However, in many settings of interest, agent utility functions themselves vary as a result of past agent choices.
The ongoing COVID-19 pandemic provides an example: a highly prevalent virus may incentivize individuals to wear masks, but extensive adoption of mask-wearing reduces virus prevalence which in turn reduces individual incentives for mask-wearing.
This paper develops a general framework using probabilistic coupling methods that can be used to derive the stochastically stable states of log-linear learning in certain games which feature such game-environment feedback.
As a case study, we apply this framework to a simple dynamic game-theoretic model of social precautions in an epidemic and give conditions under which maximally-cautious social behavior in this model is stochastically stable.


\end{abstract}
\section{Introduction}
    In social systems and distributed engineered systems, collective behavior is the result of many individuals making intertwined self-interested choices.
    In many cases, the value of a particular choice depends not only on the current choices being made by others, but also on the history of past choices.

	In principle, these socio-environmental feedback loops can be analyzed using techniques from \emph{game theory}, which has a long history of analyzing the society-scale effects of self-interested behavior.
	For instance, game theory has long been used to study the spread of social conventions \cite{young1993evolution} using models such as the graphical coordination game \cite{kearns2013graphical} with the stochastic learning algorithm log-linear learning \cite{marden2012revisiting}.
    However, traditional analysis techniques almost uniformly assume that the game's utility functions are fixed for all time, so that the agents' choices over time can be described by a stationary Markov process.
	However, such analysis fails or becomes unwieldy when utility functions themselves depend on the history of play. 
	
	Analysis techniques for history-dependent games have broad potential applications. 
	For example, in a global pandemic, the individual choice to adopt protective measures (e.g., wearing masks) may be made in response to the behavior of others and the prevalence of the disease.
	In turn, the prevalence of the disease is a function of the history of individual choices to adopt protective measures.
	As another example, game theoretic methods are frequently proposed in the area of distributed control of multiagent systems~\cite{Chandan2019,Collins2020,marden2013distributed,kanakia2016modeling}.
	However, in a distributed control application, agents' actions may directly modify the strategic environment; for instance if a search-and-rescue UAV identifies a disaster victim, that victim may be removed from the list of other UAVs' objectives.
    Other applications that can be modeled by history-dependent games are in machine learning \cite{wang2019evolutionary,garciarena2018evolved,costa2019coegan}
    and biology \cite{tilman2017maintaining,tilman2020evolutionary}.

    Owing in part to the challenges of modeling the complex game-environment feedback inherent to history-dependent games, general results on these games are elusive.
    Recent work has focused on specific learning algorithms and strategic environments, such as zero-sum games under replicator dynamics~\cite{skoulakis2020evolutionary}.
    In \cite{weitz2016oscillating} the authors characterize an oscillating tragedy of the commons effect under certain environmental feedback scenarios.
    
    In this paper, we develop a general framework for analyzing the long-run behavior of binary-action history-dependent games.
    In particular, we study the stochastically stable states of the popular log-linear learning algorithm in such settings.
    We show that if the utility functions of the history-dependent game can be appropriately referenced to the utility functions of a corresponding exact potential game, then the history-dependent game of interest inherits the stochastically stable states of the reference potential game.
    To accomplish this we apply techniques from the theory of probabilistic couplings, and derive a monotone coupling that relates play in the history-dependent game with that in the reference potential game.
    To showcase an application of the framework, we present an epidemic model that intertwines the compartmental SIS disease model with a graphical coordination game.
    Using our analysis framework we provide conditions under which the stochastically stable states may be fully characterized, despite their history-dependence.

\vspace{-0.5mm}
\section{Model}
    \subsection{Game Formulation}
        In this work we consider binary action games.
        Let $N=\{1,2,3,\dots,|N|\}$ denote the player set; player $i\in N$ has action set $A_i=\{0,1\}$.
        The joint action space is then given by $A=\{0,1\}^{|N|}$.
    	We denote an action profile as $a\in A$ and use $a_i$ to denote player $i$'s action.
    	The actions of all other players is then given by $a_{-i}=(a_1,a_2,\dots,a_{i-1},a_{i+1},\dots,a_{|N|})$.
    	We refer to the all $1$ action profile as $\vec{1}=(1)^{|N|}_{i=1}$ and similarly for the all zero profile, $\vec{0}$.
    	Further, let $\Delta(A)$ denote the standard probability simplex over $A$.
    	
    	Let $U_i:A\rightarrow \R$ be player $i$'s utility function.
    	We denote $U=\{U\}_{i\in N}$ as the collection of all players' utility function.
    	
    	A game $g=(N,A,U)$ is an \emph{exact potential game} if there exists a potential function $\phi$ such that\vs
    	\begin{equation}
			U_i(a_i',a_{-1})-U_i(a_i,a_{-i})=\phi(a_i',a_{-1})-\phi(a_i,a_{-i})\vs\vs
	    \end{equation}
	    for any $a\in A$, and $a_i,a_i'\in A_i$.

	    In this work we generalize the above by allowing each history of play to have a unique utility function.
	    %
	    We write $\A_T$ to denote the set of joint action histories of length $T\in\mathbb{N}$,
	    and denote the set of all histories as $\mathcal{A}=\cup_{T\in \mathbb{N}}\mathcal{A}_T$.
	    We write $\alpha\in\mathcal{A}_T$ to refer to a history of action profiles (or \emph{path}) and use superscripts to denote time indices so that $\alpha=(\alpha^1,\dots,\alpha^T)$.
	    We abuse notation and write $\alpha^T$ to denote the last action profile in any path $\alpha$.
	    We also define $A,\mathcal{A}_T$ as partially ordered sets by first defining partial order $\geq_A$, where $a'\geq_A a$ whenever $a,a'\in A$ and $a'_i\geq a_i$ for all $i\in N$, recalling that $a'_i,a_i\in \{0,1\}$.
	    Using this we define partial order $\geq_{\mathcal{A}_T}$ as $\bar{\alpha}\geq_{\mathcal{A}_T}\alpha$ whenever $\alpha,\bar{\alpha}\in \mathcal{A}_T$ and $\bar{\alpha}^t\geq_{A}\alpha^t$ for all $t\in\{1,2,...,T\}$.
	    
	    To model history-dependent utility functions, let $U^\alpha_i:A\rightarrow \mathbb{R}$, where this utility function is not only specific to player $i$ but also to the history $\alpha$.
    	Let $U^\alpha=(U^\alpha_1,U^\alpha_2,...,U^\alpha_{|N|})$ denote each player's utility function given history $\alpha$
    	and let $U^\mathcal{A}=\{U^\alpha\mid\alpha\in\mathcal{A}\}$ be the set of utility functions across all paths.
        We denote a \emph{history-dependent game} as tuple $(N,A,U^{\mathcal{A}})$ and let $\mathcal{G}^{\mathcal{A}}$ be the set of all such tuples.
        We now present a class of games that combines potential games and history dependence.

    	\begin{definition}\label{def:binary action history var}
    		We call a tuple $g=(N,A,U^\mathcal{A})\in \mathcal{G}^{\mathcal{A}}$ an \textit{aligned history-dependent game} if there exists an exact potential game $\hat{g}=(N,A,\hat{U})$ with potential function $\hat{\phi}$ such that:
    		\begin{enumerate}
    			\item $\{\vec{1}\}=\argmax_{z\in A}{\hat{\phi}(z)}$
    			\item $U^\alpha_i(1,\alpha^T_{-i})\geq\hat{U}_i(1,a_{-i})$
    			\item $\hat{U_i}(0,a_{-i})\geq U^{\alpha}_i(0,\alpha^T_{-i})$
    		\end{enumerate}
            for any $\alpha\in\mathcal{A}$, $a,a'\in A$, $T\in\mathbb{N}$ such that $\alpha^T_{-i}\geq_{A_{-i}} a_{-i}$ and $a,a'$ vary by only a unilateral deviation.         
            For convenience, we denote ordering $\geq_{A_{-i}}$ over $A_{-i}=\{0,1\}^{|N|-1}$ equivalently to $\geq_A$.
        \end{definition}


    \subsection{Learning in Games}
    
    	In this work we focus on the learning algorithm log-linear learning, which is a discrete time asynchronous learning algorithm \cite{alos2010logit,marden2012revisiting}.
    	That is, for game $g\in\mathcal{G}^\mathcal{A}$ at each time step log-linear learning selects a single agent uniformly at random to update their action.
        \vs
        \begin{equation}\label{eqn:bbP}
        	\bbP^\alpha_i(a_i)=\frac{\exp({\frac{1}{\tau}U^\alpha_i(a_i,\alpha^T_{-i})})}{\sum_{a_i'\in A_i}\exp({\frac{1}{\tau}U^\alpha_i(a_i,\alpha^T_{-i})})},
        	\vs
        \end{equation}
    	where $\exp(x):=e^x$, and $\tau$ is the \emph{temperature}, a parameter which governs the rationality of agents.
    	As $\tau\to0$ agents best respond with high probability, and as $\tau\to\infty$ agents choose actions uniformly at random.
       Note that we take the last action profile in the history $\alpha^T$ as the behavior of the other agents.
        The probability that $\alpha\in\mathcal{A}_T$ transitions to $a'\in A$ under log-linear learning in a single transition is \vs
        \begin{equation}\label{eqn:P}
        	P^\alpha(a')=
        	\begin{cases}
        		\frac{1}{|N|}\sum_{j\in N}\bbP^\alpha_j(a'_j) & \alpha^T=a' \\
        		\frac{1}{|N|}\bbP_i^\alpha(a'_i) & \alpha^T_i\neq a'_i,\alpha^T_{-i}= a'_{-i} \\
        		0 & \mbox{else}.
        	\end{cases}\vs
        \end{equation}
        This can be interpreted as the probability that given history $\alpha$ the next action profile $\alpha^{T+1}=a'$.
        	 
    	We say $a\in A$ is \emph{strictly} stochastically stable if the following definition holds, due to~\cite{brown2019}.
    	For any $\epsilon>0$ there exists $\mathcal{T}>0,T<\infty$ such that\vspace{-1.5mm}
		\begin{equation}\label{eqn:SS to g hat}
			\mbox{Pr}(s(t;\tau,\pi,g)=\vec{1})>1-\epsilon \mbox{ whenever }t>T,\tau<\mathcal{T}\vspace{-1.5mm}
		\end{equation}
	    where $s(\cdot)$ is a random variable representing the action profile at time $t$ under log-linear learning, given temperature $\tau$, initial distribution $\pi\in\Delta(A)$ and game $g$.
	    
	    Exact potential games under log-linear learning may be analyzed using a theory of \emph{resistance trees} \cite{young1993evolution,alos2010logit,pradelski2012learning,marden2012revisiting} to relate potential function maximizers to stochastic stability.
	    However, this analysis depends on the fact that log-linear learning induces an ergodic Markov process on the action profiles of any exact potential game, and 
	    it is unclear how to apply resistance tree techniques generally on history-dependent games to show stochastic stability.
\vspace{-0.5mm}
\section{Main Contribution}
    We now present our main result, giving that $\vec{1}$ is stochastically stable in aligned history-dependent games
	\begin{theorem} \label{thm:SS}
		If $g\in \mathcal{G}^{\mathcal{A}}$ is an aligned history-dependent game then $\vec{1}$ is strictly stochastically stable in $g$ under log-linear learning.
	\end{theorem}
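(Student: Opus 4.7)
The plan is to establish stochastic stability of $\vec{1}$ in $g$ by reducing it to the (standard) stochastic stability of $\vec{1}$ in the reference exact potential game $\hat{g}$ via a monotone probabilistic coupling. Because $\hat{g}$ is an exact potential game whose potential $\hat{\phi}$ is uniquely maximized at $\vec{1}$, classical resistance-tree / reversibility arguments for log-linear learning (e.g.\ \cite{alos2010logit,marden2012revisiting}) give that $\hat{g}$ admits a stationary distribution proportional to $\exp(\hat{\phi}(\cdot)/\tau)$, whose mass concentrates on $\vec{1}$ as $\tau\to 0$; thus $\vec{1}$ is strictly stochastically stable in $\hat{g}$ in the sense of~\eqref{eqn:SS to g hat}.

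Next I would construct a joint process $(s_t,\hat{s}_t)_{t\geq 0}$ on $g\times\hat{g}$: both chains are started from a common initial profile, at each step the same agent $i\in N$ is drawn uniformly, and a single shared $U\sim\mathrm{Unif}[0,1]$ determines $i$'s new action in both games by thresholding against the appropriate logit probability from~\eqref{eqn:bbP}. The key analytical step is an induction showing that $s_t\geq_A\hat{s}_t$ for every $t$. Assume the inequality at time $t$ and suppose agent $i$ is chosen. Then $s_{t,-i}\geq_{A_{-i}}\hat{s}_{t,-i}$, so conditions (2) and (3) of Definition~\ref{def:binary action history var} apply (with $\alpha$ the $g$-history up to time $t$ and $a=\hat{s}_t$) and yield
\begin{equation*}
U^\alpha_i(1,s_{t,-i}) - U^\alpha_i(0,s_{t,-i}) \;\geq\; \hat{U}_i(1,\hat{s}_{t,-i}) - \hat{U}_i(0,\hat{s}_{t,-i}).
\end{equation*}
Since the logit map in~\eqref{eqn:bbP} is monotone in this utility gap, $\bbP^\alpha_i(1)\geq\hat{\bbP}_i(1)$, so the common $U$ cannot upgrade $\hat{s}_{t+1,i}$ to $1$ without also upgrading $s_{t+1,i}$ to $1$. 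Coordinates $j\neq i$ are unchanged in both chains and remain ordered by the inductive hypothesis, so $s_{t+1}\geq_A\hat{s}_{t+1}$.

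Finally, since $\vec{1}$ is the top element of $(A,\geq_A)$, order preservation gives the event inclusion $\{\hat{s}_t=\vec{1}\}\subseteq\{s_t=\vec{1}\}$, so strict stochastic stability of $\vec{1}$ in $\hat{g}$ transfers pointwise to $g$ via $\Pr(s_t=\vec{1})\geq\Pr(\hat{s}_t=\vec{1})$, uniformly over the initial distribution $\pi$; this verifies~\eqref{eqn:SS to g hat} for $g$. The main obstacle I anticipate is the inductive step: one must check that the hypotheses of Definition~\ref{def:binary action history var} line up exactly with the inductive situation (the comparison needed is precisely $\alpha^T_{-i}\geq_{A_{-i}}\hat{s}_{t,-i}$, which holds by the inductive hypothesis), and that the two one-sided utility inequalities chain cleanly into the single comparison $\bbP^\alpha_i(1)\geq\hat{\bbP}_i(1)$, so that no additional structure on the full $g$-path $\alpha$ is needed.
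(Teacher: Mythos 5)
Your proposal is correct, and its top-level skeleton matches the paper's: reduce strict stochastic stability of $\vec{1}$ in $g$ to the standard result for the reference potential game $\hat{g}$ via the dominance inequality $\Pr(s(t;\tau,\pi,g)=\vec{1})\geq\Pr(s(t;\tau,\pi,\hat{g})=\vec{1})$, which is exactly the paper's Lemma~\ref{thm:proof of stoch dom+}. Where you genuinely diverge is in how that inequality is proved. The paper writes down an explicit one-step monotone coupling measure $\nu^{a,\alpha}$ (Theorem~\ref{thm:nu}, Eq.~\eqref{eqn:nu}), verifies its marginal conditions by a case analysis that leans on the bijection Lemma~\ref{thm:bijection}, extends it to a coupling of the path measures $P_\pi,\hat{P}_\pi$ (Theorem~\ref{thm:paths}), and then invokes the increasing-random-variable identity of Proposition~\ref{thm:increasing functions in expectation} on upper sets of $\mathcal{A}_T$. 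You instead build a synchronous common-randomness coupling (same uniformly drawn agent, same uniform threshold in both chains) and prove pathwise order preservation $s_t\geq_A \hat{s}_t$ by induction; the inductive step---conditions (2)--(3) of Definition~\ref{def:binary action history var} chaining into $\bbP^\alpha_i(1)\geq\hat{\bbP}^a_i(1)$ because the logit map is increasing in the utility gap---is precisely the paper's Lemma~\ref{thm:the bbP property}. Your route is shorter and more elementary: it avoids the explicit coupling formula, the marginal bookkeeping, and the path-coupling theorem, while still giving dominance for every upper event (hence the same Lemma~\ref{thm:proof of stoch dom+}) via the event inclusion $\{\hat{s}_t=\vec{1}\}\subseteq\{s_t=\vec{1}\}$. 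What the paper's heavier construction buys is the coupling as an explicit measure over histories and the exact expectation-difference identity, packaged as a reusable framework rather than a sample-path argument. Two points worth making explicit in your write-up: (i) the shared-agent/shared-uniform construction does reproduce the correct marginal kernels \eqref{eqn:bbP}--\eqref{eqn:P} for both chains (agent selection is uniform and the threshold update realizes the logit probabilities, including the case where the selected agent keeps her action), and (ii) the $g$-chain's update at time $t$ depends on its own realized history $\alpha$, which is legitimate since Definition~\ref{def:binary action history var} holds for every $\alpha\in\mathcal{A}$ with $\alpha^T_{-i}\geq_{A_{-i}}a_{-i}$, the comparison your induction supplies.
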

	
	The proof of Theorem~\ref{thm:SS} proceeds using Lemma~\ref{thm:proof of stoch dom+}, which we present here and prove in Section~\ref{sec:proofs}.
	The interpretation of this lemma is that for an aligned history dependent game $g$, the probability at any time step that $g$ is in the $\vec{1}$ action profile is lower bounded by the probability its associated exact potential game $\hat{g}$ is in the $\vec{1}$ profile.
	 \begin{lemma}\label{thm:proof of stoch dom+}
        If $g\in\mathcal{G}^\mathcal{A}$ is an aligned history-dependent game with associated exact potential game $\hat{g}$ then  ${\rm Pr}(s(T;\tau,\pi,g)=\vec{1})\geq{\rm Pr}(s(T;\tau,\pi,\hat{g})=\vec{1})$ for any temperature $\tau>0,\pi\in\Delta(A),T\in\mathbb{N}$.
    \end{lemma}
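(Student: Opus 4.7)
The plan is to prove the stronger stochastic dominance $s(T;\tau,\pi,g)\geq_A s(T;\tau,\pi,\hat{g})$ by constructing a monotone coupling of log-linear learning on $g$ with log-linear learning on $\hat{g}$. Since $\vec{1}$ is the unique maximum of $(A,\geq_A)$, almost-sure dominance of the coupled processes immediately gives $\Pr(s(T;\tau,\pi,g)=\vec{1})\geq\Pr(s(T;\tau,\pi,\hat{g})=\vec{1})$.

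The first step is a pointwise comparison of one-agent response probabilities. Fix a history $\alpha\in\mathcal{A}$, an updating player $i$, and a state $a\in A$ with $\alpha^T_{-i}\geq_{A_{-i}} a_{-i}$. Summing conditions 2 and 3 of Definition~\ref{def:binary action history var} yields
\begin{equation*}
U^\alpha_i(1,\alpha^T_{-i})-U^\alpha_i(0,\alpha^T_{-i})\geq \hat{U}_i(1,a_{-i})-\hat{U}_i(0,a_{-i}).
\end{equation*}
Rewriting \eqref{eqn:bbP} as $\bigl(1+\exp(-[U_i(1)-U_i(0)]/\tau)\bigr)^{-1}$ exhibits its monotonicity in the utility difference, so the log-linear probability of choosing action $1$ in $g$ given history $\alpha$ is at least the log-linear probability of choosing $1$ in $\hat{g}$ at state $a$.

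Next, I would construct a coupled Markov chain $(\alpha^t,\beta^t)_{t\geq 1}$ with $\alpha^1=\beta^1\sim\pi$. At each step $t$, draw a single updating agent $i_t$ uniformly from $N$ and a single $V_t\sim\mathrm{Unif}[0,1]$, and use the common threshold on both marginals: set $\alpha^{t+1}_{i_t}=1$ iff $V_t$ is at most the $g$-probability of playing $1$ under history $\alpha$ evaluated at $\alpha^t_{-i_t}$, and likewise for $\beta^{t+1}_{i_t}$ in $\hat{g}$ at $\beta^t_{-i_t}$; leave all other coordinates unchanged. Each marginal is exactly log-linear learning on its respective game by \eqref{eqn:P}. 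Induct on $t$ to establish the invariant $\alpha^t\geq_A\beta^t$: the hypothesis supplies $\alpha^t_{-i_t}\geq_{A_{-i_t}}\beta^t_{-i_t}$, the probability comparison above applies, and a short case analysis on the updated coordinate shows the common threshold $V_t$ cannot produce $\alpha^{t+1}_{i_t}<\beta^{t+1}_{i_t}$.

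The main obstacle is matching Definition~\ref{def:binary action history var}'s applicability condition to the invariant carried by the coupling: because the definition only constrains $U^\alpha_i$ in terms of the last profile $\alpha^T_{-i}$, the state-level invariant $\alpha^t\geq_A\beta^t$ (rather than a richer path-level statement) is precisely what is both needed and sufficient to drive the induction. Once the invariant holds for all $t\leq T$, the coupling gives $\Pr(\alpha^T=\vec{1})\geq\Pr(\beta^T=\vec{1})$, since $\beta^T=\vec{1}$ forces $\alpha^T=\vec{1}$ under $\geq_A$, completing the proof.
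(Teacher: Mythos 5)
Your proof is correct, and it reaches the paper's conclusion by a genuinely more elementary realization of the same high-level idea. The key single-site step you use --- summing conditions 2 and 3 of Definition~\ref{def:binary action history var} and invoking monotonicity of the logit rule in the utility difference to get $\bbP^\alpha_i(1)\geq\hat{\bbP}^a_i(1)$ whenever $\alpha^T_{-i}\geq_{A_{-i}}a_{-i}$ --- is exactly the paper's Lemma~\ref{thm:the bbP property}. Where you diverge is in how the coupling is built: you use a common-random-number (grand) coupling, drawing the same updating agent and the same uniform threshold in both chains, and prove the pathwise invariant $\alpha^t\geq_A\beta^t$ by induction, concluding via $\{\beta^T=\vec{1}\}\subseteq\{\alpha^T=\vec{1}\}$ since $\vec{1}$ is the top element of $(A,\geq_A)$. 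The paper instead constructs an explicit one-step coupling \emph{measure} $\nu^{a,\alpha}$ (Figure~\ref{fig:onestep coupling}), verified case-by-case through the $q,r,s,Q,R,S$ partition and the bijection $b^{a,a'}$ of Lemma~\ref{thm:bijection}, chains these into a path-level coupling $\nu^{\hat{g}}_\pi$ (Theorem~\ref{thm:paths}), and then applies Proposition~\ref{thm:increasing functions in expectation} to the indicator of an upper set of histories. Your construction automatically has the right marginals (each chain still picks its updater uniformly and updates by its own logit probabilities) and sidesteps the marginal-verification bookkeeping entirely, so it is shorter and arguably more transparent; note also that your invariant yields $P_\pi(\mathcal{I})\geq\hat{P}_\pi(\mathcal{I})$ for every upper set of paths, not just the endpoint event, so no generality is lost. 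What the paper's route buys is the explicit coupling measure itself as a reusable analytical object (its advertised ``monotone coupling framework''), with stochastic dominance extracted through the Strassen-type machinery of Definition~\ref{def:monotonic coupling}; your argument implicitly proves the existence of such a coupling rather than exhibiting its formula.
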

    
	The proof of Lemma~\ref{thm:proof of stoch dom+} is technically involved and depends on our novel monotone coupling framework which we present in Section~\ref{sec:proofs}.
	Using this result we now present a straightforward proof of Theorem~\ref{thm:SS}.
	
	\subsubsection*{Proof of Theorem~\ref{thm:SS}}
	    Let $g\in\mathcal{G}^{\mathcal{A}}$ be an aligned history-dependent game and $\hat{g}$ be its associated exact potential game.
    	It is well-known~\cite{alos2010logit} that in an exact potential game $a\in A$ is a stochastically stable state under log-linear learning if\vs\vs
    	\begin{equation}
    	    a\in\argmax_{a'\in A}\phi(a').\vs\vs
    	\end{equation}
        Therefore, because $\vec{1}$ is the lone maximizer of $\hat{\phi}$, it is strictly stochastically stable.
	    We apply Lemma~\ref{thm:proof of stoch dom+} directly to the definition of strict stochastic stability in~\eqref{eqn:SS to g hat}.
		For any $\epsilon>0$ there exists $\mathcal{T}>0,T<\infty$ such that\vs
		\begin{equation}\vs
			\mbox{Pr}(s(t;\tau,\pi,g)=\vec{1})\geq\mbox{Pr}(s(t;\tau,\pi,\hat{g})=\vec{1})>1-\epsilon\; \vs
		\end{equation}
		for all $t>T,\tau<\mathcal{T}$, yielding stochastic stability of $\vec{1}$ in game $g$.
    \hfill$\blacksquare$

	
	
\vspace{-1mm}
\section{A Social Distancing Example}
    To highlight Theorem~\ref{thm:SS}'s ability to analyze stochastic stability of history-dependent games, we exhibit a case study on a simple model of epidemics.
    One challenge of epidemic modeling is to account for the interplay between epidemic severity and the voluntary adoption of preventative social conventions.
    For example, in the absence of a epidemic people prefer not to wear masks; however, in a widespread epidemic people may prefer to take preventative measures.
    To model this phenomenon we intertwine the SIS compartmental epidemic model and the graphical coordination game (GCG) which models the spread and adoption of the relevant preventative social conventions; we term this model SISGCG.
    The fraction of individuals in the society susceptible to infection is described by the nonlinear differential equation\vs\vs
	\begin{equation}\label{eqn:SISGCG}
		\dot{s}=(1-s)(\gamma-\beta(t) s),\vs\vs
	\end{equation}
	where $\gamma>0$ is the curing rate and $\beta(t)>0$ is a rate of infection which depends on agent actions.
	The action $1$ represents a ``safe convention'' action in which a player is acting to reduce contagion; the action $0$ represents conventions ignoring the pandemic.
	These actions are associated with infection coefficients $0<\beta_1<\beta_0$, respectively.
	Accordingly, $\beta(t)$ is simply the average infection rate of all individuals, given their choices:\vspace{-1.5mm}
	\begin{equation}\label{eqn:betat}
		\beta(t)=\frac{1}{|N|}\sum_{i\in N}a^t_i\beta_1+(1-a^t_i)\beta_0,\vspace{-1.5mm}
	\end{equation}
	where $a_i^t$ is the action selected by player $i$ at time $t\in\mathbb{N}$.
    Actions are selected by agents in $N$ dynamically on undirected graph $G=(N,E)$ according to log-linear learning~\eqref{eqn:bbP}. 
	The utility of agent $i$ at time $t$ is given by\vs
	\begin{equation}\label{eqn:uSISGCG}
		\bar{U}_i^\alpha(a_i^t,a_{-i}^t)=a_i|\mathcal{N}_i(1)|(q+I(t))+(1-a_i)|\mathcal{N}_i(0)|,
		\vs
	\end{equation}
	where $\mathcal{N}_i(x)=\{j\in N\mid (i,j)\in E, a_j=x\}$ is the set of $i$'s neighbors who play $x\in \{0,1\}=A_i$, the fraction of infected individuals is given by $I(t) := 1-s(t)$, and $q\in (0,1]$ represents the agent's willingness to practice safe conventions in the absence of an epidemic.

	\begin{proposition}\label{thm:positive invariant}
	    If $s(0)\in[0,1)$, then if $s(t)$ is a solution of~\eqref{eqn:SISGCG} with $\beta(t)$ given by~\eqref{eqn:betat}, there exists a $\bar{t}$ such that $s(t)\leq \gamma/\beta_1$ for all $t\geq \bar{t}$ almost surely.
	\end{proposition}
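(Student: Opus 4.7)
The plan is to combine a deterministic scalar-comparison argument with the observation that log-linear learning almost surely produces occasional realizations of $\beta(t) > \beta_1$. The comparison argument shows $s(t)$ is majorized by a logistic trajectory that converges to $\gamma/\beta_1$, while the occasional excursions $\beta(t) > \beta_1$ almost surely push $s$ into the forward-invariant set $\{s \le \gamma/\beta_1\}$ in finite time.

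First I would establish forward invariance of $[0,1)$ under $\dot s = (1-s)(\gamma - \beta(t) s)$: since $s = 1$ is an equilibrium and $\dot s|_{s=0} = \gamma > 0$, the usual scalar ODE argument gives $s(t) \in [0,1)$ whenever $s(0) \in [0,1)$, keeping $1-s(t) > 0$ throughout. From~\eqref{eqn:betat}, $\beta(t) \in [\beta_1,\beta_0]$ along every sample path, so $\dot s \le (1-s)(\gamma - \beta_1 s)$ pointwise. Scalar comparison then yields $s(t) \le \bar s(t)$, where $\bar s$ solves the autonomous logistic-type ODE $\dot{\bar s} = (1-\bar s)(\gamma - \beta_1 \bar s)$ with $\bar s(0) = s(0)$; closed-form integration of this separable equation gives monotone convergence $\bar s(t) \to \min(\gamma/\beta_1, 1)$. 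Moreover, $\{s \le \gamma/\beta_1\}$ is itself forward invariant for the original system, because at $s = \gamma/\beta_1$ we have $\dot s = (1-\gamma/\beta_1)\gamma(1 - \beta(t)/\beta_1) \le 0$, so once $s$ dips below this threshold it cannot rise back above.

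If $\gamma \ge \beta_1$ the conclusion is immediate since $s(t) < 1 \le \gamma/\beta_1$ and any $\bar t$ works, so the remaining case is $\gamma < \beta_1$. The main obstacle here is the pathological realization where $s(0) > \gamma/\beta_1$ and $\beta(t) \equiv \beta_1$: convergence of $\bar s$ to $\gamma/\beta_1$ from above is only asymptotic, so along this realization $s$ would never cross below the threshold in finite time. To handle this almost surely, I would note that under log-linear learning with $\tau > 0$ every joint action has strictly positive probability at each step, so the event $\beta(t) > \beta_1$ occurs on a set of positive measure infinitely often; combined with the strictly negative drift such excursions induce while $s$ is close to $\gamma/\beta_1$, this forces $s$ into $\{s \le \gamma/\beta_1\}$ in finite time with probability one, at which point the invariance just established locks the trajectory into the claimed bound.
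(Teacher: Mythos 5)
Your proposal is correct and follows essentially the same route as the paper's proof: forward invariance of the set $\{s\le\gamma/\beta_1\}$ (since $\beta(t)\ge\beta_1$ forces $\dot s\le 0$ there), combined with the observation that log-linear learning almost surely realizes action profiles with $\beta(t)>\beta_1$ infinitely often, so the trajectory cannot stay above the threshold forever and enters the invariant set in finite time almost surely. Your added scalar comparison with the $\beta_1$-logistic ODE and the explicit treatment of the trivial case $\gamma\ge\beta_1$ are extras the paper omits, but the core mechanism and level of rigor match.
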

    \online{
	\begin{proof}
	    We write $s_1^*:= \gamma/\beta_1.$
	    Note that if $s(t)\geq s_1^*$, then because $\beta(t)\geq\beta_1$, we have that $\dot{s}\leq0$ by~\eqref{eqn:SISGCG}, and that this inequality is strict whenever $s(t)> s_1^*$.
	    Thus, the set $[0,s_1^*]$ is positively invariant for the hybrid nonlinear dynamics given in~\eqref{eqn:SISGCG}.
	    
	    To see that $s(t)$ eventually enters $[0,s_1^*]$ almost surely, consider the event that $s(t)>s_1^*$ for all $t$.
	    Since $s_1^*$ is asymptotically stable when $\beta(t)\equiv \beta_1$ and for any action profile $a\neq\vec{1}$ that its associated $\beta(t)>\beta_1$, it follows that the event that $\beta(t)\equiv \beta_1$ for all $t$ is the same event as $s(0)>s_1^*$ and $s(t)>s_1^*$ for all $t$.
	    However, it can be seen that the log-linear learning~\eqref{eqn:P} action update probabilities define a stochastic process which visits every action profile in $A$ infinitely often.
	    That is, the probability that $\beta(t)\equiv \beta_1$ is $0$, and thus there must exist a $\bar{t}$ such that $s(t)\leq s_1^*$ for all $t\geq\bar{t}$ almost surely.
	\end{proof}
	}
	
    It can be seen from \eqref{eqn:uSISGCG} that SISGCG can be represented by a history-dependent game, as the utility function depends on the history of play, so
    our Theorem~\ref{thm:SS} allows us to reference SISGCG to a related exact potential game and deduce conditions guaranteeing that $\vec{1}$ is strictly stochastically stable.
	\begin{proposition}\label{thm:sisgcg is ss}
		Let $g^S$ be an instance of SISGCG.
		If $\beta_1/\gamma>1$, $q+\gamma/\beta_1>1$ and $I(0)>0$ then $\vec{1}$ is stochastically stable in $g$.
	\end{proposition}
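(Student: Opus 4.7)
The plan is to exhibit an exact potential game $\hat{g}$ whose potential is uniquely maximized at $\vec{1}$ and for which the SISGCG instance $g^S$ is an aligned history-dependent game in the sense of Definition~\ref{def:binary action history var}. Once these are in place, Theorem~\ref{thm:SS} immediately yields strict stochastic stability of $\vec{1}$ in $g^S$.

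First I would apply Proposition~\ref{thm:positive invariant} to restrict attention to the invariant regime. The assumptions $\beta_1/\gamma > 1$ and $I(0)>0$ give $s(0)\in[0,1)$, so Proposition~\ref{thm:positive invariant} yields a time $\bar{t}$ after which $s(t) \leq \gamma/\beta_1$ almost surely, equivalently $I(t) \geq 1 - \gamma/\beta_1 > 0$. Since strict stochastic stability is a statement about $t \to \infty$ and log-linear learning is Markovian in $\alpha^T$, one can restart the process at time $\bar{t}$ and treat the game as one whose histories $\alpha$ lie entirely within this invariant regime.

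Next I would choose the reference game $\hat{g}$ to be a graphical coordination game on the same graph $G$ with utility
\[
\hat{U}_i(a) = a_i |\mathcal{N}_i(1)| \, c + (1-a_i)|\mathcal{N}_i(0)|,
\]
where $c$ is a constant chosen so that $c \leq q + I(T)$ throughout the invariant regime (the natural choice being $c = q + 1 - \gamma/\beta_1$). This is a standard exact potential game with potential $\hat{\phi}(a) = c \cdot e_{11}(a) + e_{00}(a)$, where $e_{xx}(a)$ counts edges whose endpoints both play $x$. The proposition's assumptions together with the invariant bound then ensure $c > 1$, so that $\hat{\phi}$ is strictly maximized at $\vec{1}$ (since any unilateral flip of a coordinate from $0$ to $1$ strictly increases $\hat{\phi}$ whenever $c>1$).

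It then remains to verify conditions (2) and (3) of Definition~\ref{def:binary action history var}, which follow from two monotonicity observations: whenever $\alpha^T_{-i} \geq_{A_{-i}} a_{-i}$, we have $|\mathcal{N}_i^{\alpha^T}(1)| \geq |\mathcal{N}_i^a(1)|$ and $|\mathcal{N}_i^a(0)| \geq |\mathcal{N}_i^{\alpha^T}(0)|$; combined with $q + I(T) \geq c$ on the invariant regime, this gives the required inequalities on $\bar{U}_i^\alpha$ versus $\hat{U}_i$. The main obstacle is the first step: one must cleanly justify that the transient period $t < \bar{t}$ is irrelevant, since the alignment conditions hold only in the invariant regime while Definition~\ref{def:binary action history var} is stated for all histories. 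The cleanest resolution is to appeal to the strong Markov property of log-linear learning together with the fact that strict stochastic stability in~\eqref{eqn:SS to g hat} only concerns behavior for sufficiently large $t$ and small $\tau$, so the finite-time transient can be absorbed into a new initial distribution supported on the invariant regime.
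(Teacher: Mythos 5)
Your overall architecture is the same as the paper's: invoke Proposition~\ref{thm:positive invariant} to restrict to the regime $t\geq\bar{t}$, take as reference a graphical coordination game on $G$ with a constant coefficient $c$ in place of $q+I(t)$, verify conditions (2)--(3) of Definition~\ref{def:binary action history var} via the monotonicity of $|\mathcal{N}_i(1)|$ and $|\mathcal{N}_i(0)|$, and then apply Theorem~\ref{thm:SS}. However, your specific choice $c=q+1-\gamma/\beta_1$ creates a genuine gap at condition (1) of Definition~\ref{def:binary action history var}: you claim the hypotheses ensure $c>1$, but $c>1$ is equivalent to $q>\gamma/\beta_1$, whereas the stated hypothesis $q+\gamma/\beta_1>1$ only gives $q>1-\gamma/\beta_1$. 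For example, $q=0.2$, $\gamma/\beta_1=0.9$ satisfies $\beta_1/\gamma>1$ and $q+\gamma/\beta_1>1$, yet $c=0.3<1$, so $\vec{0}$ rather than $\vec{1}$ maximizes your reference potential $\hat{\phi}$ and Theorem~\ref{thm:SS} cannot be applied. The paper instead fixes the reference utility by substituting $I(t)=\gamma/\beta_1$ into \eqref{eqn:uSISGCG}, i.e.\ $c=q+\gamma/\beta_1$, so that the hypothesis $q+\gamma/\beta_1>1$ is exactly the condition making $\vec{1}$ the lone potential maximizer, and then cites Proposition~\ref{thm:positive invariant} when verifying the alignment inequality $q+I(t)\geq q+\gamma/\beta_1$ for $t>\bar{t}$.

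Note that this moves, rather than removes, the point of tension: with the paper's constant the alignment step requires $I(t)\geq\gamma/\beta_1$ on the invariant regime, while Proposition~\ref{thm:positive invariant} as stated yields $s(t)\leq\gamma/\beta_1$, i.e.\ $I(t)\geq 1-\gamma/\beta_1$; your version keeps alignment automatic but then needs $q>\gamma/\beta_1$, which is not what is assumed. If you keep your choice of $c$, you must either strengthen the hypothesis to $q>\gamma/\beta_1$ or supply a sharper lower bound on $I(t)$ than the one you extract from Proposition~\ref{thm:positive invariant}; as written, the claim ``$c>1$'' is unsupported and the proof does not go through under the stated assumptions. Your handling of the transient (restarting at $\bar{t}$ with a new initial distribution, justified because strict stochastic stability only concerns large $t$) is consistent with what the paper does implicitly, though you should also note that $\bar{t}$ is random, so the restart argument needs a word about conditioning on the almost-sure event that such a $\bar{t}$ exists.
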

	\begin{proof}
        Let the SISGCG model be denoted by $g^S$, which played on graph $G=(N,E)$ with $q+\gamma/\beta_1>1$ and $I(0)>0$, and we consider $g^S$ as played after time $\bar{t}$ as given by Proposition~\ref{thm:positive invariant}.
		Game $g^S$ is a history-dependent game since~\eqref{eqn:uSISGCG} depends on $I(t)$, which is itself a function of the history $\alpha$.
		Thus we have $g^S=(N,A,\bar{U})\in\mathcal{G}^{\mathcal{A}}$ where we let $\bar{U}=\{\bar{U}^\alpha\mid \alpha\in\mathcal{A}\}$.
		
		Now we let $\hat{g}^S=(N,A,\hat{U}^S)$ be a graphical coordination game played on graph $G$, where the utility function $\hat{U}^S$ is given by~\eqref{eqn:uSISGCG} with $I(t)= \gamma/\beta_1$.
		Standard results give that $\hat{g}^S$ is an exact potential game and that $\vec{1}$ is its lone potential function maximizer~\cite{young1993evolution}.
	    %
    	
    	We now use $\hat{g}^S$ to show $g^S$ is an aligned history-dependent game.
    	Now we verify $U_i^\alpha(1,\alpha^T_{-i})\geq \hat{U}^S_i(1,a_{-1})$ anytime $\alpha^T_{-i}\geq_{A_{-i}}a_{-i}$, $t>\bar{t}$.
    	This can be rewritten for $t>\bar{t}$ as\vs\vs
    	\begin{equation}
    	    \sum_{j\in\mathcal{N}_i(1;\alpha^T_{-i})}q+I(t)\geq\sum_{j\in\mathcal{N}_i(1;a_{-i})}q+\gamma/\beta_1\vs\vs
    	\end{equation}
    	where $\mathcal{N}_i(1;a_{-i})$ denotes the neighbors of $i$ who are playing $1$ given profile $a$.
    	This expression holds because $\alpha^T_{-i}\geq_{A_{-i}} a_{-i}\Rightarrow |N_i(1;\alpha^T_{-i})|\geq |N_i(1;a_{-i})|$ and by Proposition~\ref{thm:positive invariant}. 
    	An argument with the same structure holds for $\bar{U}_i^\alpha(0,\alpha^T_{-i})\leq \hat{U}^S_i(0,a_{-1})$.
    	Thus $g^S$ is an aligned history-dependent game, and Theorem~\ref{thm:SS} gives $\vec{1}$ is strictly stochastically stable. 
		%
		%
				%
		%
	\end{proof}

\section{Proof of Lemma~\ref{thm:proof of stoch dom+}}\label{sec:proofs}
\subsection{A Primer on Monotone Couplings}
    We begin with the definition of a monotone coupling, the core analytical device for our paper.
	\begin{definition}\label{def:monotonic coupling}
			Let $X$ be a countable set with partial ordering $\leq_X$ and $p_1,p_2$ be probability measures on measure space $(X,\mathcal{F})$. 
			Then a \emph{monotone coupling} of $p_1,p_2$ is a probability measure $p$ on $(X^2,\mathcal{F}^2)$ satisfying the following for all $x,y\in X$ \vs\vs
		\begin{equation}\label{eqn:monotonic coupling}
				\sum_{x\leq_{X}y'} p(x,y')=p_2(y') \mbox{ and }\sum_{y\geq_Xx'}p(x',y)=p_1(x').
		\end{equation}
	\end{definition}
	
	A monotone coupling is a useful tool for analyzing the component probability measures $p_1$ and $p_2$.
	In particular the following property holds in general for monotone couplings.
    \begin{proposition}[Paarporn et al.,~\cite{paarporn2017}]\label{thm:increasing functions in expectation}
        Let $p_1,p_2$ be probability measures on $(X,\mathcal{F})$.
        If $p$ is a monotone coupling of $p_1,p_2$ then for any increasing random variable $Z:X\rightarrow \mathbb{Z}_+$ we have\vs\vs\vs\vs
        \begin{equation}
            \mathbb{E}_{p_1}(Z)-\mathbb{E}_{p_2}(Z)=\sum_{\eta=0}^\infty p(Z^c_\eta,Z_\eta)\vs\vs\vs
        \end{equation}
        where $Z_\eta=\{a\mid Z(a)>\eta\}$.
    \end{proposition}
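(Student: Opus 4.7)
The plan is to use the tail-sum representation of expectation for $\mathbb{Z}_+$-valued random variables, and then exploit the support structure of a monotone coupling together with the monotonicity of $Z$. Since $Z \geq 0$ takes integer values, I can write $Z(a) = \sum_{\eta=0}^\infty \mathbf{1}[a \in Z_\eta]$, so that $\mathbb{E}_{p_i}(Z) = \sum_{\eta=0}^\infty p_i(Z_\eta)$ for each $i \in \{1,2\}$ by Fubini. This turns the left-hand side of the claimed identity into a sum over $\eta$ of differences of the two marginal measures evaluated on the level sets $Z_\eta$.

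The next step is to read the support of $p$ directly off of Definition~\ref{def:monotonic coupling}. Comparing the full second marginal $p_2(y') = \sum_{x \in X} p(x,y')$ with the required identity $p_2(y') = \sum_{x \leq_X y'} p(x,y')$, and using non-negativity of $p$, forces $p(x,y') = 0$ whenever $x \not\leq_X y'$. Hence $p$ is concentrated on the order-cone $\{(x,y) : x \leq_X y\}$. Combined with the fact that $Z$ is increasing, any $(x,y)$ in the support of $p$ with $x \in Z_\eta$ automatically satisfies $Z(y) \geq Z(x) > \eta$, and therefore $y \in Z_\eta$; in set-theoretic terms, $\{(x,y) : x \in Z_\eta\} \cap \operatorname{supp}(p) \subseteq \{(x,y) : y \in Z_\eta\}$ up to a $p$-null set.

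The final step is a direct rectangle computation: by the inclusion above, $p_1(Z_\eta) = p(Z_\eta \times X) = p(Z_\eta \times Z_\eta)$, while $p_2(Z_\eta) = p(X \times Z_\eta)$, so the signed difference of the two marginals on the level set $Z_\eta$ is exactly $p(Z^c_\eta \times Z_\eta) = p(Z^c_\eta, Z_\eta)$. Summing over $\eta$ then yields the claimed identity. The main obstacle is not conceptual but purely bookkeeping: one must carefully extract the support of $p$ from the two sum-constraints in Definition~\ref{def:monotonic coupling}, and then use the monotonicity of $Z$ to rewrite the set difference $\{y \in Z_\eta\} \setminus \{x \in Z_\eta\}$ restricted to $\operatorname{supp}(p)$ as the clean product $Z^c_\eta \times Z_\eta$; once those two observations are in hand, the remainder is a routine application of Fubini.
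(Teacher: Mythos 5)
Your proof follows the standard layer-cake argument (tail-sum of the expectation, support of the coupling on the order cone, comparison of the two marginals on each level set $Z_\eta$), which is exactly the technique of the cited source; note the paper itself gives no proof here, deferring to Paarporn et al., so your write-up is the right kind of argument to supply. Two bookkeeping points, however, need attention. First, Definition~\ref{def:monotonic coupling} does \emph{not} assert that the unconstrained marginals of $p$ are $p_1,p_2$; it only imposes the two constrained sums. To extract the support property, sum the constraint $\sum_{x\leq_X y'}p(x,y')=p_2(y')$ over $y'\in X$: this gives $p(\{(x,y):x\leq_X y\})=1$, hence $p$ vanishes off the cone, and only then do the unconstrained marginals coincide with $p_1,p_2$. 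Your phrasing assumes the full-marginal identity up front; the fix is one line, but as written the step begs the definition.

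Second, and more substantively, track the sign of the final identity. With the paper's convention the \emph{first} coordinate of $p$ carries marginal $p_1$, the \emph{second} carries $p_2$, and the support lies in $\{x\leq_X y\}$. Your own computation then gives $p_1(Z_\eta)=p(Z_\eta\times Z_\eta)$ and $p_2(Z_\eta)-p_1(Z_\eta)=p(Z_\eta^c\times Z_\eta)$, i.e.
\begin{equation*}
\mathbb{E}_{p_2}(Z)-\mathbb{E}_{p_1}(Z)=\sum_{\eta=0}^{\infty}p(Z_\eta^c,Z_\eta),
\end{equation*}
which is the \emph{reverse} order of the difference displayed in the proposition. So your closing sentence, claiming the ``claimed identity'' follows, does not hold verbatim: what you proved is the statement with $p_1$ and $p_2$ exchanged. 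That exchanged version is in fact the one consistent with Definition~\ref{def:monotonic coupling} and with the way the proposition is invoked in the proof of Lemma~\ref{thm:proof of stoch dom+}, where $\nu^{\hat g}_\pi$ has first marginal $\hat P_\pi$, second marginal $P_\pi$, and the conclusion drawn is $P_\pi(\mathcal{I})-\hat P_\pi(\mathcal{I})=\nu^{\hat g}_\pi(\mathcal{I}^c,\mathcal{I})\geq 0$; the printed statement evidently transposes the roles of $p_1,p_2$. You should state this discrepancy explicitly and prove the corrected form, rather than silently asserting agreement with the displayed equation.
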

    Here we denote $Z^c:=X\setminus Z$ to be the complement set of $Z\subseteq X$.
    The proof is given in \cite[Proposition 1]{paarporn2017}.
	
\subsection{Notation Required for Proofs}
    Taking $\hat{g}=(N,A,U)$, we give equations analogous to \eqref{eqn:bbP}, \eqref{eqn:P} that give the transition probabilities for $\hat{g}$ under log-linear learning.
    In particular, if agent $i$ is selected to update her action then she will do so with probabilities given by\vspace{-1.75mm}
	\begin{equation}\label{eqn:bbP hat}
		\hat{\bbP}^a_i(a_i)=\frac{\exp({\frac{1}{\tau}U_i(a_i,a_{-i})})}{\sum_{a_i'\in A_i}\exp({\frac{1}{\tau}U_i(a'_i,a_{-i})})}
		\vspace{-1.75mm}
	\end{equation}
	Building on \eqref{eqn:bbP hat}, we define the probability that action profile $a$ transitions to $a'$ under log-linear learning in a single transition as \vs\vs
	\begin{equation}\label{eqn:P hat}
		\hat{P}^a(a')=
		\begin{cases}
			\frac{1}{|N|}\sum_{j\in N}\hat{\bbP}^a_j(a_j) & a=a' \\
			\frac{1}{|N|}\hat{\bbP}_i^a(a'_i) & a_i\neq a'_i,a_{-i}= a'_{-i} \\
			0 & \mbox{else}
		\end{cases}\vs\vs
	\end{equation}
	for some $i\in N$ and $a,a'\in A$.
	Additionally, we define the probability that path $\alpha\in\mathcal{A}_T$ occurs with initial distribution $\pi \in \Delta(A)$ as\vs\vs
	\begin{equation}\label{eqn:Ppi hat}
		\hat{P}_\pi(\alpha)=\pi(\alpha^1)\prod_{t=1}^{T-1} \hat{P}^{\alpha^{t}}(\alpha^{t+1})\vs\vs
	\end{equation}	
	noting that $\pi(\alpha^1)$ denotes the probability of $\alpha^1$ in initial distribution $\pi$.

	Correspondingly, the probability that path $\alpha\in\mathcal{A}_T$ occurs with initial distribution $\pi \in \Delta(A)$ on $g\in\mathcal{G^{\mathcal{A}}}$ is\vs\vs
	\begin{equation}\label{eqn:Ppi}
		P_\pi(\alpha)=\pi(\alpha^1)\prod_{t=1}^{T-1} P^{\alpha^{\leq t}}(\alpha^{t+1})\vs\vs
	\end{equation}	
	where we use $\alpha^{\leq t}\in\mathcal{A}_t$ to mean history $\alpha$ until time $t\in\{1,2,3,\dots,T\}$. 
	
	We now present a result connecting the utility conditions of aligned history-varying potential games to \eqref{eqn:bbP hat} and \eqref{eqn:bbP}.
	\begin{lemma}\label{thm:the bbP property}
	    Let $g=(N,A,U^\mathcal{A})\in\mathcal{G}^{\mathcal{A}}$, $\hat{g}=(N,A,\hat{U})$ be an exact potential game and let $i\in N,a\in A,\alpha\in\mathcal{A}$ such that $\alpha^T_{-i}\geq_{A_{-i}} a_{-i}$.
	    If $U^\alpha_i(1,\alpha^T_{-i})\geq \hat{U}_i(1,a_{-i})$ and $\hat{U}_i(0,a_{-i})\geq U^\alpha_i(0,\alpha^T_{-i})$ then $\bbP_i^\alpha(1)\geq\hat{\bbP}_i^a(1)$.
	\end{lemma}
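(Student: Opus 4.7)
The plan is to reduce the claim $\bbP_i^\alpha(1) \geq \hat{\bbP}_i^a(1)$ to a single scalar inequality comparing the utility \emph{gaps} from switching action, which follows immediately from the two hypotheses by addition.

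First I would rewrite both log-linear choice probabilities in the standard sigmoid form by dividing numerator and denominator by the exponential corresponding to action $1$. Concretely, using~\eqref{eqn:bbP} and~\eqref{eqn:bbP hat}, we get
\begin{equation*}
\bbP_i^\alpha(1) = \frac{1}{1+\exp\!\bigl((U_i^\alpha(0,\alpha^T_{-i}) - U_i^\alpha(1,\alpha^T_{-i}))/\tau\bigr)},
\end{equation*}
and analogously for $\hat{\bbP}_i^a(1)$ with $\hat{U}_i$ and $a_{-i}$ in place of $U_i^\alpha$ and $\alpha^T_{-i}$. Since $x \mapsto 1/(1+\exp(x/\tau))$ is strictly decreasing in $x$ for $\tau>0$, the desired inequality $\bbP_i^\alpha(1) \geq \hat{\bbP}_i^a(1)$ is equivalent to
\begin{equation*}
U_i^\alpha(0,\alpha^T_{-i}) - U_i^\alpha(1,\alpha^T_{-i}) \;\leq\; \hat{U}_i(0,a_{-i}) - \hat{U}_i(1,a_{-i}).
\end{equation*}

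Next I would establish this scalar inequality by adding the two hypotheses. The hypothesis $U_i^\alpha(1,\alpha^T_{-i}) \geq \hat{U}_i(1,a_{-i})$ gives $-U_i^\alpha(1,\alpha^T_{-i}) \leq -\hat{U}_i(1,a_{-i})$, while $\hat{U}_i(0,a_{-i}) \geq U_i^\alpha(0,\alpha^T_{-i})$ is already in the right form. Summing the two yields exactly the required gap inequality, completing the proof.

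There is no real obstacle here; the lemma is essentially an algebraic observation that the sigmoid is monotone in the utility gap and that the two hypotheses in Definition~\ref{def:binary action history var} bracket this gap in the correct direction. The only mild care needed is to note that $\tau > 0$ so the monotonicity is preserved and that the ordering assumption $\alpha^T_{-i} \geq_{A_{-i}} a_{-i}$ is what makes the two utility hypotheses applicable simultaneously to the same $i$, $\alpha$, and $a_{-i}$. This lemma will then be the workhorse used in the monotone coupling construction to dominate the transition kernel of $\hat g$ by that of $g$ pointwise along comparable histories.
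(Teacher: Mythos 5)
Your proof is correct and follows essentially the same route as the paper: the paper likewise writes both probabilities as logistic ratios and invokes monotonicity of $x\mapsto e^x/(e^x+c)$ in the two utility arguments, which is equivalent to your reduction to the single gap inequality obtained by adding the two hypotheses. No gaps.
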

	\online{
	\begin{proof}
	    Let $g=(N,A,U^\mathcal{A})\in\mathcal{G}^{\mathcal{A}}$, $\hat{g}=(N,A,\hat{U})\in\mathcal{G}$ and let $i\in N,a\in A,\alpha\in\mathcal{A}$ be such that $\alpha^T_{-i}\geq_{A_{-i}} a_{-i}$.
	    Further let $U^\alpha_i(1,\alpha^T_{-i})\geq \hat{U}_i(1,a_{-i})$ and $\hat{U}_i(0,a_{-i})\geq U^\alpha_i(0,\alpha^T_{-i})$.
    	Recalling $\tau>0$, we begin by considering $\bbP_i^\alpha$
	    \begin{equation}\label{eqn:bbP property proof}
	    \begin{aligned}
            \bbP_i^\alpha(1)&=
            \frac{e^{\frac{1}{\tau}U^\alpha_i(1,\alpha^T_{-i})}}
            {e^{\frac{1}{\tau}U^\alpha_i(1,\alpha^T_{-i})}+e^{\frac{1}{\tau}U^\alpha_i(0,\alpha^T_{-i})}}
            \\
            &\geq
            \frac{e^{\frac{1}{\tau}\hat{U}_i(1,a_{-i})}}
            {e^{\frac{1}{\tau}\hat{U}_i(1,a_{-i})}+e^{\frac{1}{\tau}\hat{U}_i(0,a_{-i})}}
            =\hat{\bbP}_i^a(1).
	    \end{aligned}
	    \end{equation}
	    To see the inequality, it suffices to apply the hypothesis to the fact that $e^x$ and $l(x)=\frac{e^x}{e^x+c}$ are both increasing in $x$ for $c>0$.
	    Thus $\bbP_i^\alpha(1)\geq\hat{\bbP}_i^a(1)$ holds as desired.
	\end{proof}
	}


    Our framework requires a careful partitioning of the action space corresponding to different types of agent action deviations.
	Let $f:A\rightarrow 2^{A}$ be defined as  $f(a)=\{a'\in A \mid a_i\neq a_i',a_{-i}=a_{-i}' \mbox{ for } i\in N\}$ be the set of action profiles reachable from $a$ via exactly one unilateral deviation.
	For $a,a'\in A$ let\vs\vs
	\begin{equation}
			g(a,a')=
			\begin{cases}
				i & a_i\neq a_i' \\
				0 & a=a' 	
			\end{cases}\vs\vs
	\end{equation}
	indicate which agent unilaterally deviated their action between action profiles $a,a'$. 

	Now, let $a,a'\in A$ where $a'\geq_A a$.
	We denote several disjoint subsets of $f(a)$:
	\begin{enumerate}
		\item $r(a)=\{z\in f(a) \mid a_{g(a,z)}=1\}$,
		\item $q(a,a')=\{z\in f(a)\mid  z\leq_{A} a'\}\setminus r(a)$, and
		\item $s(a,a')=f(a)\setminus (q(a,a')\cup r(a))$.
	\end{enumerate}
	These sets can be interpreted in the following way. 
	The set $r(a)$ is the set of action profiles which decreased with respect to $\geq_A$ and $q(\cdot)$, $s(\cdot)$ both increased.
	Between $q(\cdot)$ and $s(\cdot)$, $q(\cdot)$'s action profiles remain less than $a'$ and $s(\cdot)$'s profiles are greater then or incomparable to $a'$.
	We now present three more analogous sets that are disjoint subsets of $f(a')$:
	\begin{enumerate}
		\item $R(a')=\{z\in f(a') \mid a'_{g(a',z)}=0\}$,
		\item $Q(a,a')=\{z\in f(a')\mid z\geq_{A} a\}\setminus R(a')$, and
		\item $S(a,a')=f(a')\setminus (Q(a,a')\cup R(a))$.
	\end{enumerate}
	The interpretation of these sets are reversed relative to $r(\cdot)$, $q(\cdot)$ and $s(\cdot)$.
	
	We now highlight some useful features of these sets.
    It is evident that $q(\cdot),r(\cdot),s(\cdot)$ are a disjoint partition of $f(a)$, and that $Q(\cdot),R(\cdot),S(\cdot)$ are a disjoint partition of $f(a')$.
	For any $a,a'$, $a'\geq_A a$ we relate these sets by a function $b^{a,a'}:f(a)\rightarrow f(a')$.
	To evaluate $b^{a,a'}(\bar{a})$,  identify the agent who deviated their action between $a,\bar{a}$ and then deviate that agent's action in $a'$. 
	Formally, $b^{a,a'}(\bar{a})=(\neg a'_{g(a,\bar{a})},a'_{-g(a,\bar{a})})$ where we define $\neg a_i \in \{0,1\}\setminus \{a_i\}$ for $a_i\in A_i=\{0,1\}$.
	In particular, this function relates the disjoint subsets of $f(a),f(a')$ according to the following lemma.
	\begin{lemma}\label{thm:bijection}
	    If $a,a'\in A$ and $a\leq_{A} a'$, then the following statements hold:
	    \begin{enumerate}
	        \item $b^{a,a'}:r(a)\rightarrow S(a,a')$ is a bijection,
	        \item $b^{a,a'}:s(a,a')\rightarrow R(a')$ is a bijection, and
	        \item $b^{a,a'}:q(a,a')\rightarrow Q(a,a')$ is a bijection.
	    \end{enumerate}
	\end{lemma}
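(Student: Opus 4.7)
The plan is to first establish that $b^{a,a'}$ is already a bijection from $f(a)$ onto $f(a')$ in a natural agent-indexed way: every element of $f(a)$ is uniquely determined by the single agent $i = g(a,\bar{a})$ whose action was flipped, and likewise every element of $f(a')$ is uniquely indexed by a single agent. The map $b^{a,a'}$ is exactly the identification of these agent labels. Since $\{r(a),q(a,a'),s(a,a')\}$ and $\{R(a'),Q(a,a'),S(a,a')\}$ are each disjoint partitions of equinumerous sets, once I verify that $b^{a,a'}$ sends each source class \emph{into} the claimed target class, bijectivity of each restriction follows for free.

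The three containment verifications are a case analysis on $i = g(a,\bar{a})$, using the hypothesis $a \leq_A a'$ throughout. For claim (1), if $\bar{a} \in r(a)$ then $a_i = 1$, so $a_i' = 1$; thus $b^{a,a'}(\bar{a})$ has $i$-th coordinate $0$, which both keeps it out of $R(a')$ (since $a'_i \neq 0$) and out of $Q(a,a')$ (since its $i$-th coordinate is strictly below $a_i$), placing it in $S(a,a')$. For claim (2), if $\bar{a} \in s(a,a')$ then $\bar{a} \notin r(a)$ forces $a_i = 0$, and $\bar{a} \not\leq_A a'$ combined with $a_{-i} \leq_A a'_{-i}$ forces $a_i' = 0$; hence $b^{a,a'}(\bar{a})$ flips $a_i'$ from $0$ to $1$, putting it in $R(a')$. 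For claim (3), if $\bar{a} \in q(a,a')$ then $a_i = 0$ and $\bar{a} \leq_A a'$ forces $a_i' = 1$; then $b^{a,a'}(\bar{a})$ has $i$-th coordinate $0$, ruling out $R(a')$, and dominates $a$ coordinatewise, placing it in $Q(a,a')$.

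Together these three inclusions are enough: because the source partition classes map into distinct target partition classes under a global bijection of the same total size, each restriction must itself be a bijection (any failure of surjectivity in one restriction would contradict the partitioning on the target side). A brief cardinality check gives a useful sanity reconciliation: $|r(a)|$ counts agents with $a_i = 1$, and $|S(a,a')|$ counts agents $i$ with $a_i' = 1$ but $b^{a,a'}(\bar{a}) \not\geq_A a$, which by $a \leq_A a'$ collapses to agents with $a_i = 1$; analogous counts work for the other two pairs.

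The main obstacle is purely notational: keeping straight which of the six sets is distinguished by (a) whether the flipped agent's pre-flip coordinate was $0$ or $1$ in the relevant base profile, and (b) whether the post-flip profile maintains the appropriate order relation to the \emph{other} base profile. No deep argument is needed beyond a disciplined case split, but the definitions of $q(\cdot)$, $s(\cdot)$, $Q(\cdot)$, $S(\cdot)$ mix order conditions with exclusion of $r(\cdot)$ and $R(\cdot)$, so care in bookkeeping is essential to avoid circular reasoning.
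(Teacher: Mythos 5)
Your proof is correct, and it is organized differently from the paper's. The paper proves one of the three statements ($b^{a,a'}\colon r(a)\to S(a,a')$) head-on: injectivity via the agent-label/binary-action argument, and surjectivity by taking an arbitrary $z'\in S(a,a')$, deducing the forced coordinate values ($a'_i=1$, $z'_i=0$, hence $a_i=1$), and exhibiting its preimage in $r(a)$; the other two statements are asserted to follow "similarly." You instead establish once that $b^{a,a'}$ is a global bijection $f(a)\to f(a')$ (both sets are indexed by the deviating agent), verify only the three forward inclusions $b^{a,a'}(r(a))\subseteq S(a,a')$, $b^{a,a'}(s(a,a'))\subseteq R(a')$, $b^{a,a'}(q(a,a'))\subseteq Q(a,a')$ by the same kind of coordinate case analysis using $a\leq_A a'$, and then recover surjectivity of each restriction from the fact that $\{r,q,s\}$ and $\{R,Q,S\}$ are disjoint partitions: a preimage (under the global bijection) of an element of, say, $S(a,a')$ cannot lie in $q$ or $s$ without landing in $Q$ or $R$, contradicting disjointness. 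Your containment checks are all accurate (in particular the deductions $a'_i=1$ in cases (1) and (3), and $a'_i=0$ in case (2), are exactly the places where $a\leq_A a'$ is used), so nothing is missing. What your route buys is uniformity and economy: you never have to construct preimages class by class, and the three claims are handled simultaneously rather than by an appeal to "similar" arguments; what the paper's route buys is that it is self-contained per claim and does not need the auxiliary observation that the six classes partition $f(a)$ and $f(a')$ (which the paper states separately anyway), so the two are of comparable length but yours makes the "proved similarly" step unnecessary.
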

\online{
\begin{proof}
    Let $a,a'\in A$ such that $a'\geq_A a$.
    We proceed by proving $b^{a,a'}:r(a)\to S(a')$ is a bijection; the other bijection statements are proved similarly.

	We begin by proving injectiveness, that is $b^{a,a'}(z)=b^{a,a'}(z')\implies z=z'$ for $z,z'\in r(a)$.
	Observe $g(a,z)=g(a',b^{a,a'}(z))=g(a',b^{a,a'}(z'))=g(a,z')$ where the first and third inequalities follow by definition of $b^{a,a'}$ and the middle by hypothesis.
	Injectiveness follows from $g(a,z)=g(a,z')$ meaning $a,z$ and $a,z'$ differ by the same agent's unilateral deviation. 
	In that context, the possible actions agent $g(a,z)$ is given by $A_{g(a,z)}\setminus \{a_{g(a,z)}\}$ which is a singleton by the binary action property, leaving only one possible state $a$ could transition to in $r(a)$ via a unilateral deviation.
	Thus $z=z'$ as desired.
	
	Next we show surjection, that is for any $z'\in S(a,a')$ there exists a $z\in r(a)$ such that $b^{a,a'}(z)=z'$, for $a,a'\in A$ and $a\leq_{A}a'$.
	By definition of $S(a,a')$, $z'\ngeq a'$, but as $z'\in f(a')$ $z',a'$ differ by only a single unilateral deviation by some agent $i$.
	By partial ordering $\leq_{A}$ we may infer $a'_i=1,z'_i=0$ else $z'\ngeq a'$ would be violated.
	Further, we may infer $a=1$ as suppose $a=0$, then $z'\in Q(a,a')$, giving a contradiction to the definition of $z'$.
	It is easy to see by definition of $r(a)$ that $a_i=1\implies z\in r(a)$ satisfying $g(a,z)=g(a',z')$ as $z_i\neq a_i$ but $z_{-i}=a_{-i}$ by $z\in f(a)$.
	Note $g(a,z)=g(a',z')$ is always satisfied when $b^{a,a'}(z)=z'$ by definition of the function.
\end{proof}
}
	

\subsection{The One-Step Couplings}
    To prove Lemma~\ref{thm:proof of stoch dom+} and obtain Theorem~\ref{thm:SS}, we construct a monotone coupling $\nu^{\hat{g}}_\pi$ between measures $P_\pi,\hat{P}_\pi$.
    We first construct a family of monotone couplings for each one-step transition (Theorem~\ref{thm:nu}),
    which we apply to show the coupling over histories (Theorem~\ref{thm:paths}).
	\begin{theorem}
		\label{thm:nu}					

		Let $g\in \mathcal{G}^{\mathcal{A}}$ denote an aligned history-dependent game and $\hat{g}$ be its associated exact potential game.
	    Then a monotone coupling exists between $\hat{P}^a$ and $P^\alpha$ for any $\alpha\in\mathcal{A},a\in A$ whenever $a\leq_A\alpha^T$. 
		This monotone coupling $\nu^{a,\alpha}:A^2\rightarrow [0,1]$ is given in (\ref{eqn:nu}) in Figure~\ref{fig:onestep coupling}.
	\end{theorem}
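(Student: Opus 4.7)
The plan is to verify that the explicit formula defining $\nu^{a,\alpha}$ in (\ref{eqn:nu}) is in fact a probability measure on $A^2$, that its two marginals are $\hat{P}^a$ and $P^\alpha$, and that its support is contained in the monotone set $\{(x,y)\in A^2:x\leq_A y\}$.

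First I would establish nonnegativity of each entry of $\nu^{a,\alpha}$. The nontrivial entries are ``slack'' terms arising whenever exactly one of the two processes flips agent $i$; by construction these are of the form $(\bbP^\alpha_i(1)-\hat{\bbP}^a_i(1))/|N|$ or its complement $(\hat{\bbP}^a_i(0)-\bbP^\alpha_i(0))/|N|$. Both are nonnegative precisely because Lemma~\ref{thm:the bbP property} supplies $\bbP^\alpha_i(1)\geq\hat{\bbP}^a_i(1)$ whenever $\alpha^T_{-i}\geq_{A_{-i}}a_{-i}$, and the hypothesis $a\leq_A\alpha^T$ guarantees this for every $i\in N$.

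Next I would check the two marginal identities of Definition~\ref{def:monotonic coupling} block by block, using the three bijections of Lemma~\ref{thm:bijection} to pair cells of $\hat{P}^a$ against cells of $P^\alpha$. For each $i\in N$ I consider the three admissible cases for $(a_i,\alpha^T_i)$ (the case $(1,0)$ is ruled out by $a\leq_A\alpha^T$) and verify that summing $\nu^{a,\alpha}(x,\cdot)$ over $y$ recovers the appropriate branch of (\ref{eqn:P hat}), and symmetrically summing over $x$ recovers (\ref{eqn:P}). The self-loop mass at $(a,\alpha^T)$ accumulates the ``neither agent flips'' residuals across all $i\in N$, matching the self-loop branches of (\ref{eqn:P hat}) and (\ref{eqn:P}).

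Monotonicity then reduces to verifying, in each of the same three cases, that the construction assigns zero mass to pairs $(x,y)$ with $x\not\leq_A y$. The main obstacle I anticipate is the mixed case $a_i=0,\alpha^T_i=1$: here the two processes' potential flips at agent $i$ point in opposite directions ($\hat{g}$'s agent may flip up to $1$ while $g$'s agent may flip down to $0$), and the lone bad outcome in which both flips occur simultaneously at the same coordinate would break monotonicity. The coupling evades this by forbidding that pair outright; the budget needed to still match both marginals is exactly the slack $\bbP^\alpha_i(1)-\hat{\bbP}^a_i(1)\geq 0$ supplied by Lemma~\ref{thm:the bbP property}, which is absorbed into the self-loop $(a,\alpha^T)$ contribution. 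The two remaining cases $a_i=\alpha^T_i\in\{0,1\}$ collapse to a standard Strassen-style coupling of two scalar Bernoullis and require no new idea.
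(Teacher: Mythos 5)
Your proposal is correct and takes essentially the same route as the paper's proof: a direct verification that (\ref{eqn:nu}) is a well-defined measure whose restricted marginal sums recover $\hat{P}^a$ and $P^\alpha$, with nonnegativity of the slack and diagonal terms supplied by Lemma~\ref{thm:the bbP property} and the marginal bookkeeping organized via the bijections of Lemma~\ref{thm:bijection}. Your per-coordinate reading of the mixed case $a_i=0,\alpha^T_i=1$ (forbid simultaneous opposite flips and absorb exactly the slack $\bbP^\alpha_i(1)-\hat{\bbP}^a_i(1)$ into the cell $(a,\alpha^T)$) is precisely the content of the paper's nonnegativity check for case~(\ref{eqn:nu:corner}).
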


		\begin{proof}
			Let $a\in A,\alpha\in\mathcal{A}$ such that $a\leq_A \alpha^T$ and let $g\in\mathcal{G}^{\mathcal{A}}$ be an aligned history-dependent game where $\hat{g}$ is its associated exact potential game.
			To verify $\nu^{a,\alpha}$ is a monotone coupling we must show the following conditions from Definition~\ref{def:monotonic coupling} for any $\bar{a},\bar{a}'\in A$:
			\begin{enumerate}
				\item \label{cond:well defined} $\nu^{a,\alpha}$ is a well-defined probability measure, 
				\item \label{cond:lesser sum} $\sum\limits_{z'\geq_{A}\bar{a}}\nu^{a,\alpha}(\bar{a},z')=\hat{P}^{a}(\bar{a})$, and 
				\item \label{cond:greater sum} $\sum\limits_{z\leq_{A}\bar{a}'}\nu^{a,\alpha}(z,\bar{a}')=P^{\alpha}(\bar{a}')$.
			\end{enumerate}

			We begin by verifying Condition \ref{cond:lesser sum}.
			We consider cases $\bar{a}\notin (f(a)\cup \{a\}),$ $\bar{a}\in q,\bar{a}\in r,\bar{a}\in s$ and $\bar{a}=a$ separately.
			We use the notational convention that $q,s,Q,S$ are assumed to take arguments $(a,\alpha^T)$ and $r,R$ take the argument $a,\alpha^T$ respectively.
			The first case represents any $\bar{a}$ that cannot be achieved in a single unilateral deviation from $a$.
			Trivially, this gives that $\hat{P}^a(\bar{a})=0$, and thus all pairs of $\bar{a},z'$ must satisfy $\nu^{a,\alpha}(\bar{a},z')=0$.
			This holds as all parts of \eqref{eqn:nu} require $\bar{a}\in (f(a)\cup \{a\})$ except \eqref{eqn:nu:else}, which has the desired property.
			
			We now consider the second case that $\bar{a}\in q$.
			Note that only (\ref{eqn:nu:q}) satisfies this condition, so\vs\vs 
			\begin{equation}
				\begin{aligned}
				\sum_{z'\geq_{A}\bar{a}'}\nu^{a,\alpha}(\bar{a},z')&=\nu^{a,\alpha}(\bar{a},\alpha^T) \\ 
				&=\hat{\bbP}^a_{g(a,\bar{a})}(\bar{a}_{g(a,\bar{a})})/|N| =\hat{P}^{a}(\bar{a})
				\end{aligned}\vs\vs			
			\end{equation}
		as desired.
			
			Next we consider $\bar{a}\in r$ which satisfies \eqref{eqn:nu:r}, \eqref{eqn:nu:rS} uniquely since $b^{a,\alpha^T}$ is a bijection by Lemma~\ref{thm:bijection}.
			Thus \vs\vs
			\begin{equation}
				\begin{aligned}
					\sum_{z'\geq_{A}\bar{a}'}\nu^{a,\alpha}(\bar{a},z')&=
					\frac{1}{|N|}\big(\hat{\bbP}^a_{g(a,\bar{a})}(0)\\
					&\qquad-\bbP^\alpha_{g(a,\bar{a})}(0)
					+\bbP^\alpha_{g(\alpha^T,\bar{a}')}(0)\big)\\
					&=\frac{1}{|N|}\hat{\bbP}^a_{g(a,\bar{a})}(0)
					=\hat{P}^{a}(\bar{a})
				\end{aligned}\vs\vs
			\end{equation}
			where the second equality follows as $g(a,\bar{a})=g(\alpha^T,\bar{a}')$ by definition of $b^{a,\alpha^T}$.
			The third equality follows as $\bar{a}\in{r}\implies \bar{a}_{g(a,\bar{a})}=0$.
			
			Considering $\bar{a}\in s$, we find only \eqref{eqn:nu:sR} applies, thus for\vs\vs
			\begin{equation}
				\sum_{z'\geq_{A}\bar{a}'}\nu^{a,\alpha}(\bar{a},z')
				=\frac{1}{|N|}\hat{\bbP}^a_{g(a,\bar{a})}(1)
				=\hat{P}^{a}(\bar{a})\vs\vs
			\end{equation}	
			where $\bar{a}\in s\implies\bar{a}_{g(a,\bar{a})}=1$ or else $\bar{a}$ would be in $q$.
			
			The final case for Condition~\ref{cond:lesser sum} is $\bar{a}=a$.
			we find cases \eqref{eqn:nu:R}, \eqref{eqn:nu:Q}, and \eqref{eqn:nu:corner} apply yielding:\vs
			\begin{equation}
			\begin{aligned}
				\sum_{z'\geq_{A}\bar{a}'}&\nu^{a,\alpha}(\bar{a},z')
				=\frac{1}{|N|}\bigg(|N|-\sum_{z\in q\cup r}\hat{\bbP}^a_{g(a,z)}(z_{g(a,z)})\\
				&\qquad-\sum_{z'\in R}\hat{\bbP}^a_{g(\alpha^T,z')}(1)\bigg)\\
				&=\frac{1}{|N|}\sum_{z\in f(a)}\left(1-\hat{\bbP}^a_{g(a,z)}(z_{g(a,z)})\right)
				=\hat{P}^{a}(\bar{a})	\vs		
				\end{aligned}
			\end{equation}
			where the first equality follows as sums over $Q\cup R$ are equivalent to the sums over $Q$ and $R$ as $Q,R$ are disjoint, and that $z'\in R\Leftrightarrow z'_{g(\alpha^T,z')}=1$ by definition of $R$.
			The second equality follows as the $R$ sum is equivalent to one over $s$ by bijection $b^{a,\alpha^T}$, and then we may combine it with the sum over $q\cup r$, to a sum over $f(a)$ and $|f(a)|=|N|$.
			We omit arguments for Condition~\ref{cond:greater sum} as they run parallel to Condition~\ref{cond:lesser sum}.
			
			To verify Condition~\ref{cond:well defined}, we consider each case of (\ref{eqn:nu}) separately.
			Equations~(\ref{eqn:nu:Q}), (\ref{eqn:nu:q}), (\ref{eqn:nu:sR}), (\ref{eqn:nu:rS}), and (\ref{eqn:nu:else}) are trivial as these probabilities are well defined by definition.
			Lemma~\ref{thm:the bbP property} provides:\vs\vs
			\begin{equation}\label{eqn:bbP equvalence}
             \bbP^\alpha_i(1)\geq\hat{\bbP}^a_i(1)\Leftrightarrow\hat{\bbP}^a_i(0)\geq\bbP^\alpha_i(0)\vs\vs
			\end{equation}
			where the right hand side follows from $\bbP_i(1,a',w)+\bbP_i(0,a',w)=1=\bbP_i(1,a',w_0)+\bbP_i(0,a',w_0)$.
			Equation~(\ref{eqn:nu:R}) follows directly from the hypothesis and (\ref{eqn:nu:r}) holds from the right side of the equivalence.
			
			The lone remaining case is (\ref{eqn:nu:corner}), for which we define
			sets $N_{q}=\{g(a,z)\mid z\in q\}$, $N_{Q}=\{g(\alpha^T,z)\mid z\in Q\}$ and so on for $r,s,R,S$.
			We denote unions of these sets as $N_{qr}:=N_q\cup N_r$,  $N_{QR}:=N_Q\cup N_R$ and so on for other combinations of $q,r,s$ and $Q,R,S$.
			Recalling $q,r,Q,R$ are partitions over states that $a,\alpha^T$ may transition to, similarly, $N_{qr}$, $N_{QR}$ are partitions of agents whose unilateral deviations result in such transitions.
			This enables us to expand (\ref{eqn:nu:corner}):
            \vspace{-1.5mm}
			\begin{equation}
			\begin{aligned} \label{eqn:nu a alpha}
				\nu^{a,\alpha}(\bar{a},\bar{a}')&=
				\frac{1}{|N|}\Bigg(\sum_{i\in N_{qr}\cap N_{QR}}(1-\hat{\bbP}^a_i(\neg a_i)\\
				&\qquad-\bbP^\alpha_i(\neg \alpha^T_i))\\
				&\qquad+\sum_{i\in N_{qr}\setminus N_{QR}}(1-\hat{\bbP}^a_i(\neg a_{i}))\\
				&\qquad+\sum_{i\in N_{QR}\setminus N_{qr}}(1-\bbP^\alpha_i(\neg \alpha^T_{i}))\Bigg).
			\end{aligned}\vs\vs		
			\end{equation}
			This expansion takes advantage of $|N|=|f(a)|$ which allows $|N|$ to enter the sums as $1$.
			It now suffices to show that the summand of each sum is a well defined probability, of which the last two terms clearly are.
			
		    We begin by investigating $i\in N_{qr}\cap N_{QR}$.
		    In particular, we have $N_q=N_Q,N_s=N_R,N_r=N_S$ due to $b^{a,\alpha^T}$ and its bijectiveness due to Lemmas~\ref{thm:bijection}.
		    By disjointness of $q,r$ we have $N_{qr}=N_{QS}$ which we apply to $N_{qr}\cap N_{QR}=N_{QS}\cap N_{QR}=N_Q=N_q$.
			Applying definitions of $q,Q$ we find $i\in N_q\implies\neg a_i=1,\neg \alpha^T_i=0$.
			Thus the summand of the first sum for $i\in N_q$ is given by\vs\vs
			\begin{equation}
			    1-\hat{\bbP}^a_i(1)-\bbP^\alpha_i(0)\geq 1-\bbP^\alpha_i(1)-\bbP^\alpha_i(0)=0\vs\vs
			\end{equation}
			wherein the inequality is by \eqref{eqn:bbP equvalence}, giving that the summands in the first term of~\eqref{eqn:nu a alpha} are themselves well defined probabilities.
			As all conditions have been met, $\nu^{a,\alpha}$ is a monotone coupling as desired.
			%
		   %
		%
			%
	\end{proof}
	
			\begin{figure*}
		\begin{subnumcases}{\nu^{a,\alpha}(\bar{a},\bar{a}')=\label{eqn:nu}}
	   				\frac{1}{|N|}\left(\bbP^\alpha_{g(\alpha^T,\bar{a}')}(1)- \hat{\bbP}^a_{g(a',\bar{a}')}(1)\right) &  $\bar{a}=a,\bar{a}'\in R$ \label{eqn:nu:R}\\
			\frac{1}{|N|}\bbP^\alpha_{g(\alpha^T,\bar{a}')}(\bar{a}'_{g(\alpha^T,\bar{a}')}) &  $\bar{a}=a,\bar{a}'\in Q$\label{eqn:nu:Q} \\
			\frac{1}{|N|}\left(\hat{\bbP}^a_{g(a,\bar{a})}(0)- \bbP^\alpha_{g(a,\bar{a})}(0)\right) &  $\bar{a}\in r,\bar{a}'=\alpha^T$ \label{eqn:nu:r} \\						
			\frac{1}{|N|}\hat{\bbP}^a_{g(a,\bar{a})}(\bar{a}_{g(a,\bar{a})}) &  $\bar{a}\in q,\alpha^T=\bar{a}'$ \label{eqn:nu:q}\\						
			\frac{1}{|N|}\hat{\bbP}^a_{g(a,\bar{a})}(1) & $\bar{a}=b^{a,\alpha^T}(\bar{a}'),\bar{a}'\in R$ \label{eqn:nu:sR}\\
			\frac{1}{|N|}\bbP^\alpha_{g(\alpha^T,\bar{a}')}(0) & $\bar{a}\in r,\bar{a}'=b^{a,\alpha^T}(\bar{a})$ \label{eqn:nu:rS}\\
			\begin{array}{r}
			\frac{1}{|N|}\Big(|N|-\sum\limits_{z\in q\cup r}\hat{\bbP}^a_{g(a,z)}(z_{g(a,z)}) 
			-\sum\limits_{z'\in Q\cup R}\bbP^\alpha_{g(\alpha^T,z')}(z'_{g(\alpha^T,z')})\Big)
			\end{array}
			& $a=\bar{a},\alpha^T=\bar{a}'$	\label{eqn:nu:corner}		\\			
					0 & \mbox{otherwise.} \label{eqn:nu:else}
		\end{subnumcases}
		\caption{The full specification of the one-step monotone coupling for Theorem~\ref{thm:nu}.
		We adopt the notational convention that $q,s,Q,S$ are assumed to take arguments $a,a'$ and $r,R$ take the argument $a,a'$ respectively.
		\vspace{-5mm}
		\label{fig:onestep coupling}}
		\end{figure*}
	
\subsection{A monotone coupling over histories}
    We now present coupling $\nu^{\hat{g}}_\pi$ which is constructed using the one-step coupling.
    Using this coupling we then go on to prove Lemma~\ref{thm:proof of stoch dom+}.
    We define indicator function $\mathds{1}$ such that $\mathds{1}(P)=1$ if $P$ is a true logical proposition and $\mathds{1}(P)=0$ else.
	\begin{theorem}\label{thm:paths}
		
		Let $g\in\mathcal{G}^{\cal A}$ be an aligned history-dependent game and $\hat{g}$ be its corresponding exact potential game.
		Then $\nu_\pi^{\hat{g}}:\mathcal{A}_T^2\rightarrow[0,1]$ is a monotone coupling between $\hat{P}_\pi,P_\pi$.
		This coupling is given by\vs\vs\vs
		\begin{equation}\label{eqn:path coupling}
		\begin{aligned}
			\nu^{\hat{g}}_\pi(\alpha,  \bar{\alpha})=
			\pi(\alpha^1)\mathds{1}(\alpha^1=\bar{\alpha}^1)\prod_{t=1}^{T-1}\nu^{\alpha^t,\bar{\alpha}^{\leq t}}(\alpha^{t+1},\bar{\alpha}^{t+1})
		\end{aligned}\vs\vs\vs
		\end{equation}
		where $\alpha,\bar{\alpha}\in\mathcal{A}_T$, $\pi\in \Delta(A)$.
	\end{theorem}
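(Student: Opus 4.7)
My plan is to prove Theorem~\ref{thm:paths} by induction on the path length $T$, using the one-step coupling $\nu^{a,\alpha}$ from Theorem~\ref{thm:nu} as the building block. The base case $T=1$ is essentially immediate: the product in \eqref{eqn:path coupling} is empty, so $\nu_\pi^{\hat{g}}(\alpha,\bar{\alpha})=\pi(\alpha^1)\mathds{1}(\alpha^1=\bar{\alpha}^1)$, which is the diagonal coupling of $\pi$ with itself; since $\hat{P}_\pi$ and $P_\pi$ both reduce to $\pi$ on length-one paths, Definition~\ref{def:monotonic coupling} is satisfied trivially.

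The inductive step is the main content. All three objects factor multiplicatively between times $T-1$ and $T$: from \eqref{eqn:path coupling}, \eqref{eqn:Ppi hat}, and \eqref{eqn:Ppi},
\[\nu_\pi^{\hat{g}}(\alpha,\bar{\alpha})=\nu_\pi^{\hat{g}}(\alpha^{\leq T-1},\bar{\alpha}^{\leq T-1})\,\nu^{\alpha^{T-1},\bar{\alpha}^{\leq T-1}}(\alpha^T,\bar{\alpha}^T),\]
with analogous decompositions $\hat{P}_\pi(\alpha)=\hat{P}_\pi(\alpha^{\leq T-1})\hat{P}^{\alpha^{T-1}}(\alpha^T)$ and $P_\pi(\bar{\alpha})=P_\pi(\bar{\alpha}^{\leq T-1})P^{\bar{\alpha}^{\leq T-1}}(\bar{\alpha}^T)$. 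To verify the marginal condition $\sum_{\bar{\alpha}\geq_{\mathcal{A}_T}\alpha}\nu_\pi^{\hat{g}}(\alpha,\bar{\alpha})=\hat{P}_\pi(\alpha)$, I split the summation so that the inner sum runs only over $\bar{\alpha}^T\geq_A\alpha^T$. By the one-step marginal property from Theorem~\ref{thm:nu}, this inner sum evaluates to $\hat{P}^{\alpha^{T-1}}(\alpha^T)$, and the outer sum over $\bar{\alpha}^{\leq T-1}\geq_{\mathcal{A}_{T-1}}\alpha^{\leq T-1}$ reduces to $\hat{P}_\pi(\alpha^{\leq T-1})$ by the inductive hypothesis. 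Their product is $\hat{P}_\pi(\alpha)$, as required. The $P_\pi$ marginal follows by a symmetric argument, using the second marginal identity of the one-step coupling.

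The step I expect to require the most care is justifying the appeal to Theorem~\ref{thm:nu} at each level of the recursion: the one-step coupling $\nu^{\alpha^{T-1},\bar{\alpha}^{\leq T-1}}$ is only guaranteed to exist and satisfy the marginal identities under the precondition $\alpha^{T-1}\leq_A\bar{\alpha}^{T-1}$. Conveniently, this precondition is exactly what is enforced by the outer constraint $\bar{\alpha}^{\leq T-1}\geq_{\mathcal{A}_{T-1}}\alpha^{\leq T-1}$ in the marginal sum, so the chain of invocations remains valid throughout the recursion. Finally, well-definedness of $\nu_\pi^{\hat{g}}$ as a probability measure on $\mathcal{A}_T^2$ follows because each factor in \eqref{eqn:path coupling} is a nonnegative number in $[0,1]$ and total mass one is inherited from $\sum_{\alpha\in\mathcal{A}_T}\hat{P}_\pi(\alpha)=1$ via the marginal identity just established.
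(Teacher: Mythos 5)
Your induction is the paper's argument in a different guise: the paper verifies the marginal identities by writing the restricted sum in nested (``combinatorial'') form and telescoping from the rightmost factor using the one-step marginals of Theorem~\ref{thm:nu}, which is exactly what your inductive step does one layer at a time. Your base case, your factorization of $\nu^{\hat g}_\pi$, $\hat P_\pi$, $P_\pi$ across the last time step, and your observation that the ordering precondition $\alpha^{T-1}\leq_A\bar{\alpha}^{T-1}$ needed to invoke Theorem~\ref{thm:nu} is supplied by the outer constraint of the restricted sum are all correct; the symmetric argument for the $P_\pi$ marginal also works because the inner sum $P^{\bar{\alpha}^{\leq T-1}}(\bar{\alpha}^T)$ does not depend on $\alpha^{T-1}$ and factors out.

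There is, however, a genuine gap in your well-definedness step. You assert that every factor of \eqref{eqn:path coupling} lies in $[0,1]$ and that total mass one is ``inherited'' from $\sum_\alpha \hat P_\pi(\alpha)=1$ via the marginal identity; both claims silently assume that $\nu^{\hat g}_\pi(\alpha,\bar{\alpha})=0$ whenever $\alpha\nleq_{\mathcal{A}_T}\bar{\alpha}$, and neither is justified without that. Theorem~\ref{thm:nu} only constructs $\nu^{a,\alpha}$, and only guarantees its entries are probabilities, under the precondition $a\leq_A\alpha^T$; for a pair of paths whose order breaks at some time, the later factors $\nu^{\alpha^t,\bar{\alpha}^{\leq t}}(\cdot,\cdot)$ fall outside that theorem (the sets $q,Q$ and the map $b^{a,a'}$ are defined only for ordered pairs, and nonnegativity of entries such as \eqref{eqn:nu:R} relies on Lemma~\ref{thm:the bbP property}, which requires the order). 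Moreover, your marginal identities account only for mass on ordered pairs, so total mass one does not follow unless the off-order mass is shown to vanish. The paper closes this with a short argument you omit: if $\alpha\nleq_{\mathcal{A}_T}\bar{\alpha}$ while $\alpha^1=\bar{\alpha}^1$, take the minimal $t$ with $\alpha^t\leq_A\bar{\alpha}^t$ but $\alpha^{t+1}\nleq_A\bar{\alpha}^{t+1}$; at that $t$ the one-step coupling is covered by Theorem~\ref{thm:nu} and assigns probability zero there, so the entire product is zero. Adding this support argument, and deducing nonnegativity and total mass from it, completes your proof.
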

	\begin{proof}
		Let $\alpha,\bar{\alpha}\in \mathcal{A}_T$ and let $g\in\mathcal{G}^\mathcal{A}$,  and let $\hat{g}$ be the corresponding exact potential game.
		We begin by showing that if $\alpha\nleq_{\mathcal{A}_T}\bar{\alpha}$, then $\nu^{\hat{g}}_\pi(\alpha,\bar{\alpha})=0$.
		Immediately, we have $\nu^{\hat{g}}_\pi(\alpha,\bar{\alpha})=0$ if $\alpha^1\neq\bar{\alpha}^1$, so we need only consider cases where $\alpha^1=\bar{\alpha}^1$.
		Inductively we find that if $\alpha\nleq_{\mathcal{A}_T}\bar{\alpha}$ there must exist some $t\in\{1,2,3,\dots,T-1\}$ such that $\alpha^t\leq_{A}\bar{\alpha}^t$ but $\alpha^{t+1}\nleq_{A}\bar{\alpha}^{t+1}$, and let $t$ be the minimal such value.
		In this case we have $\nu^{\alpha^t,\bar{\alpha}^{\leq t}}(\alpha^{t+1},\bar{\alpha}^{t+1})=0$ because $\nu^{\alpha^t,\hat{\alpha}^{\leq t}}$ is a well defined monotone coupling by Theorem~\ref{thm:nu}, yielding $\nu^{\hat{g}}_\pi(\alpha,\bar{\alpha})=0$ as desired.
		It also follows that $\nu^{\hat{g}}_\pi$ will always yield a well defined probability as it is either $0$ or a product of well defined probabilities.
		Thus we only need to show that the marginal probabilities are preserved given by \eqref{eqn:monotonic coupling}. 
		We begin by showing the left equation of \eqref{eqn:monotonic coupling}, that is:\vs
		\begin{equation}\label{eqn:marginal path coupling}
			\sum_{\alpha\leq_{\mathcal{A}_T}z}\nu^{\hat{g}}_\pi(\alpha,z)=\hat{P}_\pi(\alpha) \mbox{ for each }z\in\mathcal{A}_T\vs
		\end{equation}
		and omit the proof for the right hand equation as it proceeds identically.
		By inspecting (\ref{eqn:path coupling}), we only need to consider $z$ such that $z^1=\alpha^1$ and $z$ features at most a single unilateral deviation between any $t,t+1$.
		With these two conditions we rewrite
		\vs\vs
		\begin{equation}\label{eqn:combitorial form}
		\begin{aligned}
			\sum_{\alpha\leq_{\mathcal{A}_T}z}\nu^{\hat{g}}_\pi(\alpha,z) &=\sum_{\alpha\leq_{\mathcal{A}_T}z}\pi(\alpha^1)\prod_{t=1}^{T-1}\nu^{\alpha^t,z^{\leq t}}(\alpha^{t+1},z^{t+1})\\
			&=\pi(\alpha^1)\sum_{\alpha^2\leq_A z^2}\hs[2]\nu^{\alpha^1,z^{\leq 1}}(\alpha^{2},z^{2})\ldots\\
			\quad\quad&\sum_{\alpha^{T}\leq_A z^{T}}\hs[2]\nu^{\alpha^{T-1},z^{\leq T-1}}(\alpha^{T},z^{T}).
		\end{aligned}\vs
		\end{equation}
	as the combinatorial form. 
	Critically, this allows us to to apply the marginal sum properties of $\nu^{\alpha^t,z^{\leq t}}$ from Theorem~\ref{thm:nu} for each $t\in\{1,2,..,T\}$.
	First, considering the rightmost sum in \eqref{eqn:combitorial form}, it holds that\vs
	\begin{equation}
	\begin{aligned}
		&\sum_{\alpha^{T}\leq_A z^{T}}\nu^{\alpha^{T-1},z^{\leq T-1}}(\alpha^{T},z^{T})=\hat{P}^{\alpha^{T-1}}(\alpha^T).
	\end{aligned}\vs
	\end{equation}
	Because this has no dependence on $z$ we may factor out $\hat{P}^{\alpha^{T-1}}(\alpha^T)$ and repeat the process on the new rightmost sum.
	After performing this process recursively on all sums, we have \vs\vs
	\begin{equation}
	\sum_{\alpha\leq_{\mathcal{A}_T}z}\nu^{\hat{g}}_\pi(\alpha,z)=\pi(\alpha^1)\prod_{t=1}^{T-1}\hat{P}^{\alpha^{t}}(\alpha^{t+1})=\\
	\hat{P}_\pi(\alpha)\vs\vs
	\end{equation}
	as desired, noting we accounted for the indicator functions in $\nu^{\hat{g}}_\pi$.
	This concludes the proof of Theorem~\ref{thm:paths}.
	\end{proof}

Now that the necessary results have been developed we proceed with the proof of Lemma~\ref{thm:proof of stoch dom+}.
\subsubsection*{Proof of Lemma~\ref{thm:proof of stoch dom+}}
        Let $g\in\mathcal{G}^{\mathcal{A}}$ be an aligned history-dependent game and $\mathcal{I}\subset\mathcal{A}_T$ be an upper set.
        Define $\mathds{1}_\mathcal{I}(\alpha):=\mathds{1}(\alpha\in\mathcal{I})$ as an indicator function.
    	Consider probability measures $P_\pi,\hat{P}_\pi$ coupled by $\nu^{\hat{g}}_\pi$ in Theorem~\ref{thm:paths}, we have\vspace{-1.5mm}
    	\begin{equation}\label{eqn:stoch dom}
    	\begin{aligned}
            P_\pi(\mathcal{I})-\hat{P}_\pi(\mathcal{I})&=\mathbb{E}_{P_\pi}(\mathds{1}_{\mathcal{I}})-\mathbb{E}_{\hat{P}_\pi}(\mathds{1}_{\mathcal{I}}) \\
            &=\nu^{\hat{g}}_\pi(\mathcal{I^C},\mathcal{I})\geq0.
    	\end{aligned}\vspace{-1.5mm}
    	\end{equation}
    	where the second equality follows by Proposition~\ref{thm:increasing functions in expectation} as $\mathds{1}_{\mathcal{I}}$ is increasing in $\mathcal{A}_T$.
    	Note \eqref{eqn:stoch dom} runs parallel to the proof of \cite[Corollary 3]{paarporn2017}.
		That is, for any upper set $\mathcal{I}\subset \mathcal{A}_T$ we have \vs\vs
		\begin{equation} \label{eqn:stoch dom of Ppi hatPpi}
		    P_\pi(\mathcal{I})\geq \hat{P}_\pi(\mathcal{I}).\vs\vs
		\end{equation}
		Let $((\vec{0})^{T-1}_{t=1},\vec{1})\in \mathcal{I}$.
		This induces $\mathcal{I}$ such that it includes every path such that at time $T$ the $\vec{1}$ state is played.
		This yields the following interpretation \vs
		\begin{equation}\label{eqn:probability of 1}
		P_\pi(\mathcal{I})=\mbox{Pr}(s(T;\tau,\pi,g)=\vec{1})\vs
		\end{equation}
		representing the probability that at time $T$ game $g$ is in the $\vec{1}$ action profile given initial distribution $\pi\in\Delta(A)$ and learning temperature parameter $\tau$.
		Noting a parallel interpretation to \eqref{eqn:probability of 1} holds for $\hat{P}_\pi,\hat{g}$, we apply these to \eqref{eqn:stoch dom of Ppi hatPpi} to obtain\vs\vs
		\begin{equation}
		    \mbox{Pr}(s(T;\tau,\pi,g)=\vec{1})\geq\mbox{Pr}(s(T;\tau,\pi,\hat{g})=\vec{1})\vs\vs
		\end{equation}
		as desired.
    \hfill$\blacksquare$

\vs\vs
\bibliographystyle{ieeetr}
\bibliography{library}

\end{document}